\newif\ifconference
\DeclareMathOperator{\ess}{Ess}
\newcommand{\Pp}{P}
\newtheorem{theorem}{Theorem}
\newtheorem{corollary}[theorem]{Corollary}
\newtheorem{definition}[theorem]{Definition}
\newtheorem{lemma}[theorem]{Lemma}
\newtheorem{observation}[theorem]{Observation}
\theoremstyle{remark}
\newtheorem*{remark}{Remark}
\newcommand{\BE}{\mathbb E}
\newcommand{\BI}{\mathbb I}
\newcommand{\BP}{\mathbb P}
\newcommand{\BR}{\mathbb R}
\newcommand{\Var}{\operatorname{Var}}
\date{}
\title{\Large Estimating the Effective Support Size in Constant Query Complexity}
\title{Estimating the Effective Support Size in Constant Query Complexity}
\author{Shyam Narayanan \thanks{MIT, \texttt{shyamsn@mit.edu}, supported by the NSF GRFP Fellowship, and the NSF TRIPODS Program (award DMS-2022448). This work was partially done while he was visiting BARC, University of Copenhagen.}\and Jakub Tětek\thanks{BARC, Univ. of Copenhagen, \texttt{j.tetek@gmail.com}, supported by the VILLUM Foundation grant 16582.}}
\author{
    Shyam Narayanan\thanks{Shyam Narayanan is supported by the NSF GRFP Fellowship, and the NSF TRIPODS Program (award DMS-2022448)}\\
    \texttt{shyamsn@mit.edu} \\ 
    MIT
    \and
    Jakub Tětek\thanks{Jakub T\v{e}tek is supported by the VILLUM Foundation grant 16582.}\\
    \texttt{j.tetek@gmail.com}\\
    BARC, Univ. of Copenhagen
}
\begin{document}
\maketitle

\ifconference
\fancyfoot[R]{\scriptsize{Copyright \textcopyright\ 2023 by SIAM\\
Unauthorized reproduction of this article is prohibited}}
\else
\thispagestyle{empty}
\fi

\begin{abstract}
\ifconference \small\baselineskip=9pt  \fi
    Estimating the support size of a distribution is a well-studied problem in statistics. Motivated by the fact that this problem is highly non-robust (as small perturbations in the distributions can drastically affect the support size) and thus hard to estimate, Goldreich [ECCC 2019] studied the query complexity of estimating the $\epsilon$-\emph{effective support size} $\ess_\epsilon$ of a distribution $\Pp$, which is equal to the smallest support size of a distribution that is $\epsilon$-far in total variation distance from $\Pp$.
    
    In his paper, he shows an algorithm in the dual access setting (where we may both receive random samples and query the sampling probability $p(x)$ for any $x$) for a bicriteria approximation, giving an answer in $[\ess_{(1+\beta)\epsilon},(1+\gamma) \ess_{\epsilon}]$ for some values $\beta, \gamma > 0$. However, his algorithm has either super-constant query complexity in the support size or super-constant approximation ratio $1+\gamma = \omega(1)$. He then asked if this is necessary, or if it is possible to get a constant-factor approximation in a number of queries independent of the support size.
  %
%
%
    
    We answer his question by showing that not only is complexity independent of $n$ possible for $\gamma>0$, but also for $\gamma=0$, that is, that the bicriteria relaxation is not necessary. Specifically, we show an algorithm with query complexity $O(\frac{1}{\beta^3 \epsilon^3})$. That is, for any $0 < \epsilon, \beta < 1$, we output in this complexity a number $\tilde{n} \in [\ess_{(1+\beta)\epsilon},\ess_\epsilon]$. 
    We also show that it is possible to solve the approximate version with approximation ratio $1+\gamma$ in complexity $O\left(\frac{1}{\beta^2 \epsilon} + \frac{1}{\beta \epsilon \gamma^2}\right)$. 
%
    Our algorithm is very simple, and has $4$ short lines of pseudocode.
\end{abstract}

\newpage
\pagenumbering{arabic}

\section{Introduction} 

Estimating the support size of a distribution is one of the most fundamental problems in statistics and has been studied over many decades, starting with the paper of \citet{fisher1943relation} in 1943. 
%
Estimating the support size in full generality is, however, impossible. This is because any distribution is infinitesimally close to a distribution with arbitrarily large support. One common approach is to assume a lower bound on the elements' probabilities \cite{raskhodnikova2009strong,valiant2011estimating,valiant2013estimating,wu2019chebyshev}. This assumption is, however, not always reasonable in practice. This motivated \citet{goldreich2019complexity} to study algorithms for estimating a relaxed quantity known as the \emph{effective support size}.
The $\epsilon$-effective support size (abbreviated as $\ess_\epsilon$) of a distribution $\Pp$ is defined as the smallest $n$ such that there exists a distribution $\Pp'$ supported on $n$ elements that is $\epsilon$-far in total variation distance, that is $\|\Pp - \Pp'\|_{TV} = \epsilon$. This problem is also too non-robust to be estimable in sublinear complexity. However, it leads to a natural bicriteria approximation: we may ask to find a value in $[\ess_{(1+\beta)\epsilon}, (1+\gamma)\ess_{\epsilon}]$ for some $\gamma, \beta>0$. 

Support size estimation fits in the general subfield of distribution testing, where the goal is to test or learn properties of a distribution from samples or queries to the distribution. Various settings have been considered in the context of distribution testing. One common setting is the dual access setting \cite{batu2005complexity,guha2006streaming,canonne2014testing}, where in addition to sampling access to the distribution, we may ask for the sampling probability $p(x)$ of any item $x$. 
Goldreich \cite{goldreich2019complexity} studied the effective support size estimation in exactly this setting, and this is also the setting we use in this paper. In \cite{goldreich2019complexity}, Goldreich gave algorithms that either have super-constant approximation ratio of $O\left(\log^{(k)} (n/\epsilon)\right)$ for any fixed constant $k$, or $\gamma=0$ and query complexity $O(\log^*(n/\epsilon)/\text{poly}(\epsilon \beta))$\footnote{We use $\log^{(k)}(x)$ to denote the $k$th iterated logarithm of $x$, i.e., $\log^{(1)}(x) := \log x$ and for $k \ge 2$, $\log^{(k)}(x) := \log (\log^{(k-1)}(x))$. We use $\log^*(x)$ to denote the smallest nonnegative integer $t$ such that $\log^{(t)}(x) \le 1$.}. In this paper, we show that it is possible to get the best of both worlds: query complexity independent of the support size $n$ in complexity $O(\text{poly}(1/(\epsilon \beta)))$, and having $\gamma = 0$. That is, we show that the bicriteria approximation is not necessary, and that the relaxation of the problem in terms of $\beta>0$ is sufficient for the problem to be efficiently solvable. This answers positively the following question posed by Goldreich. Specifically, all of the following questions have a positive answer:

%
%
%

\begin{quote}
\textbf{Open Problem 1.10} from \cite{goldreich2019complexity}, \textbf{Open Problem 100} from \cite{sublinear} \textit{(approximators of the effective support size with performance guarantees that are oblivious of the distribution)}:
For a constant $\beta>0$, does there exist an algorithm that, on input $\epsilon>0$ and oracle access to $\Pp$, uses $s(\epsilon)$ queries and outputs an $f(\epsilon)$-factor approximation of the $[\epsilon, (1+\beta) \cdot \epsilon]$-effective support size of $\Pp$, where $s$ and $f$ are functions of $\epsilon$ only? If so, can both functions be polynomials in $1 / \epsilon$? And, if so, can we have $s(\epsilon)=\operatorname{poly}(1 / \epsilon)$ and $f=1$?\footnote{We remark that we have slightly rephrased their problem: in this paper, we set $1+\beta$ to be what they have set as $\beta$.}
\end{quote}

Specifically, we give an algorithm for $1+\gamma$-approximate $[\epsilon, (1+\beta)\epsilon]$-effective support size in time $O\left(\frac{1}{\beta^2 \epsilon} + \frac{1}{\beta \epsilon \gamma^2}\right)$. How do we decrease the $\gamma$ to $0$? Goldreich proved that one may decrease the $\gamma$ to $0$, at the cost of an increase in $\beta$ by a factor of $\gamma/\epsilon$ (Observation~\ref{obs:tradeoff_error_parameters}). This means we may compute a $\beta \epsilon/2$-approximate estimate of the $[\epsilon, (1+\beta/2)\epsilon]$ effective support size, which will also be a 1-approximation of the $[\epsilon, (1+\beta)\epsilon]$-effective support size, as desired. This results in an algorithm with complexity $O\left(\frac{1}{\beta^3 \epsilon^3}\right)$ for the unicriterion approximation version of the problem.

\subsection{Our techniques.}
If our distribution is uniform, it would be natural to sample an item $y$ and return $1/p(y)$, which would be equal to the universe size. It is easy to show that this in fact gives an unbiased estimate for general distributions:
\[
\BE\left[\frac{1}{p(y)}\right] = \sum_{y \in U} p(y) \frac{1}{p(y)} = \sum_{y \in U} 1 = |U|.
\]
This simple estimator has in fact been used to estimate support size, such as in ~\cite{canonne2014testing, onak2018probability}. Our estimator uses this observation as a starting point, and bears resemblance to~\cite{canonne2014testing, onak2018probability}.

The $\epsilon$-effective support size corresponds to ignoring the smallest probability items totaling $\epsilon$ probability mass. Let us therefore order the universe in order of increasing probabilities. We may then modify the above estimator as follows. We generate a sufficiently large random sample of items drawn from $\Pp$ and compute the $(1+\beta/2)\epsilon$-quantile $x$ of the samples with respect to the order, and define $p = \Pp(x)$. If we sampled enough items, it should hold that $\BP_{X \sim \Pp} (X < x) = \epsilon^*$, for $\epsilon^* \approx (1+\beta/2)\epsilon$. If all probabilities are distinct, we may then use $\BI[p(X) \geq p(x)]/p(X),$ where $X \sim \Pp$, as an unbiased estimate\footnote{The unbiasedness also follows from the fact that this is a special case of the Hansen-Hurwitz estimator \cite{hansen1943theory}.} of $\ess_{\epsilon^*}$:
\[
\BE\left[\frac{\BI[p(X) \geq p(x)]}{p(X)}\right] = \sum_{y \in U} p(y) \frac{\BI[p(y) \geq p(x)]}{p(y)} = \sum_{y \in U} \BI[p(y) \geq p(x)] = \ess_{\epsilon^*}.
\]
Here, we use $\BI$ to denote the indicator random variable for an event. The final equality holds from a known observation (see Observation \ref{obs:equivalent_definition}) that if the $n$ heaviest elements in $\Pp$ have total probability $1-\epsilon$, then $\ess_{\epsilon} = n$.

We now bound the variance.
In some of the cases, we may use a straightforward analysis that we will now describe; we briefly describe the final and most difficult case at the end of this subsection.

Specifically, we use the fact that for a random variable $X$ with $X \geq 0$ almost surely, $\Var[X] \leq \BE[X] \sup[X]$, where $\sup[X]$ represents the maximum value $X$ may take. Because an indicator variable is at most $1$ and $p(X) \ge p(x) = p$ whenever the indicator is true, this gives us
\[
\Var\left[\frac{\BI[p(X) \geq p(x)]}{p(X)}\right] \leq \BE\left[\frac{\BI[p(X) \geq p(x)]}{p(X)}\right] \cdot \frac{1}{p} = \frac{\ess_{\epsilon^*}}{p}.
\]

If it were the case that $\ess_{\epsilon^*} \geq \frac{\epsilon^*\beta}{100 p}$, this would be sufficient, as we could use this to upper-bound the variance by $\leq \frac{100}{\epsilon^* \beta} \ess_{\epsilon^*}^2$ which directly leads by Chebyshev's inequality to an algorithm with complexity independent of $n$.

The difficult case is thus when $\ess_{\epsilon^*} < \frac{\epsilon^*\beta}{100 p}$. The basic idea is that in this case, we can show that $\ess_{(1-\beta/4)\epsilon^*}$ is significantly larger than $\ess_{\epsilon^*}$, meaning that the interval $[\ess_{\epsilon^*},\ess_{(1-\beta/4)\epsilon^*}]$ is large. As we are assuming $\epsilon^* \approx (1+\beta/2)\epsilon$, any answer in this range is valid. Intuitively speaking, the fact that the range of valid outputs is large then makes the problem easier.

The items that are counted in $\ess_{(1-\beta/4)\epsilon^*}$ but not in $\ess_{\epsilon^*}$ have $\beta\epsilon^*/4 \geq \beta\epsilon/4$ probability mass, but each has a probability of being sampled at most $p$. There are therefore at least $\beta\epsilon/(4p)$ of them. Hence, $\ess_{(1-\beta/4)\epsilon^*} \geq \beta\epsilon/(4p)$ and $p \geq \beta \epsilon/(4\ess_{(1-\beta/4)\epsilon^*})$. 
This also implies that $\ess_{(1-\beta/4)\epsilon^*} \ge 2 \ess_{\epsilon^*}$, since we know $\ess_{\epsilon^*} < \frac{\epsilon^* \beta}{100 p},$ and $\epsilon^* \approx (1+\beta/2) \epsilon$.
We now argue that we return a value $\leq \ess_{(1-\beta/4)\epsilon^*} \leq \ess_{\epsilon}$. Since each estimate is unbiased with variance at most $\ess_{\epsilon^*}/p$, by the Chebyshev inequality, if we average this estimate over $t$ samples, with high probability we return a value that is at most
\[
\ess_{\epsilon^*} + O\left(\frac{1}{\sqrt{t}} \cdot \sqrt{\ess_{\epsilon^*}/p}\right) \leq \ess_{\epsilon^*} + O\left(\frac{1}{\sqrt{t}} \cdot \sqrt{\ess_{\epsilon^*}\ess_{(1-\beta/4)\epsilon^*}/(\epsilon \beta)}\right) \leq \ess_{(1-\beta/4)\epsilon^*} \leq \ess_{\epsilon}.
\]
Above, the first inequality holds by Chebyshev's inequality, the second inequality holds by our assumption $\ess_{(1-\beta/4)\epsilon^*} \geq \beta\epsilon/(4p)$, and the last inequality holds by the assumption $\epsilon^* \approx (1+\beta/2)\epsilon$. The third inequality holds as long as $t \ge \frac{C}{\epsilon \beta}$ for some large constant $C$, since $\ess_{\epsilon^*} \le \frac{1}{2} \cdot \ess_{(1-\beta/4)\epsilon^*}$, 
Importantly, $t$ only needs to depend on $\beta$ and $\epsilon$, not on $n$.

It remains to prove that we return a value that is at least
$\ess_{(1+\beta)\epsilon}$.
Unfortunately, the variance of our estimator can be arbitrarily large (if one of the probabilities is extremely small), and we thus cannot use Chebyshev's inequality to prove that the returned value will be close to the expectation, and thus not too small. We get around this issue by show a different random variable that is stochastically dominated\footnote{We define stochastic domination in \Cref{sec:prelims}.} by our estimator, and whose variance is small enough and expectation large enough for this argument to work. Since our estimator stochastically dominates this random variable, it is also not too small with good probability.

\subsection{Related work.}

The problem of support size estimation has been studied over many decades. To the best of our knowledge, the problem was first considered in 1943 under parametric assumptions by \citet{fisher1943relation}. Under slightly different assumptions, the problem was then considered in 1953 by \citet{good1953population}. A large number of works have since followed (see \cite{gandolfi2004nonparametric} for a survey). However, no approach with formal guarantees without parametric assumptions was known until the study of this problem in the context of distribution testing.

Distribution testing has also enjoyed a long line of research over the past few decades (see~\citet{canonnesurvey} for a survey).
The study of the support size estimation problem in the context of distribution testing started more recently, with \cite{raskhodnikova2009strong} in 2009. Perhaps the most common parametrization of this problem in distribution testing is by the smallest probability of any item. That is, one assumes that $1/n \leq p(x)$ for any item $x$ in the universe, and $n$ is now no longer the universe size. In this setting, a line of research \cite{raskhodnikova2009strong,valiant2011estimating,valiant2013estimating,wu2019chebyshev} lead to an algorithm with complexity $O(\frac{n}{\log n} \log^2 (1/\epsilon))$ to estimate the support up to additive error $\epsilon \cdot n$ in the setting when we have sampling access to the distribution.

There are several settings that are commonly studied in distribution testing. Among them are the dual setting, notably systematically studied by \citet{canonne2014testing}, and probability-revealing samples defined by \citet{onak2018probability}. The dual setting has also been considered prior to \cite{canonne2014testing} in \cite{batu2005complexity,guha2006streaming}. The dual setting assumes that we may ask for the sampling probability of an item. In the probability-revealing samples setting, we get with each sampled item, its sampling probability. The difference is that we may ask the dual oracle for probabilities even of items that have not been sampled. 
A related setting is the ``learning-based'' distribution framework~\cite{eden2021learning}, which is similar to the probability-revealing samples setting except with each sampled item, we only receive an $O(1)$-approximation to the sampling probability rather than the exact sampling probability.
In the dual and probability-revealing samples settings, it is possible to get in time $O(1/\epsilon^2)$ an additive $\pm \epsilon n$ approximation, and in the learning-based setting, it is possible to get the same approximation in time $O(n^{1-\Theta(1/\log \epsilon^{-1})} \cdot \log \epsilon^{-1})$~\cite{eden2021learning}. In all of these settings, we again choose $n$ such that $1/n \le p(x)$ for all $x$ in the universe, which means $n$ can be much larger than the universe size \cite{canonne2014testing,onak2018probability, eden2021learning}. This may, however, be a very poor approximation, if some sampling probabilities are very small. This motivates the notion of effective support size, as this is known to be optimal \cite{canonne2014testing,onak2018probability} and relative approximation is thus not possible in complexity independent of $n$.


While effective support size was first defined by \citet{blais2017ess} and also studied in \citet{StewartDC18}, the specific problem of \emph{estimating} effective support size was first studied later by \citet{goldreich2019complexity}. The main motivation for this relaxation of the problem is that it is possible to get a relative approximation to the effective support size, even if there are no promises on the minimum probability. Specifically, Goldreich shows that for any $\epsilon>0$ and any fixed $\beta > 0$, it is possible to get a $(1+\gamma)$-approximation to the $[\epsilon, (1+\beta)\epsilon]$-effective support size, in complexity $s$, for:
%
\begin{enumerate}
\setlength\itemsep{0.0em}
    \item $s=O(1 / \epsilon)$ and $1+\gamma=O\left(\epsilon^{-1} \log (n / \epsilon)\right)$,
    \item $s=\widetilde{O}(1 / \epsilon)$ and $1+\gamma=O(\log (n / \epsilon))$.
    \item For any constants $t, k \in \mathbb{N}$, it holds that $s=\widetilde{O}\left(t / \epsilon^{1+\frac{1}{k}}\right)$ and $1+\gamma=\widetilde{O}(\log^{(t)}(n / \epsilon))$.
    \item For any constant $k \in \mathbb{N}$, it holds that $s=\widetilde{O}\left(\log ^{*}(n / \epsilon) / \epsilon^{1+\frac{1}{k}}\right)$ in expectation and $\gamma=\beta$.
\end{enumerate}

\paragraph{Simplicity:} To our knowledge, the only work to study estimating effective support size is that of \citet{goldreich2019complexity}. We note that our algorithm not only achieves a better query complexity but is also substantially simpler and shorter, both in terms of algorithm description and analysis.

\section{Preliminaries} \label{sec:prelims}

\subsection{Effective support size and its properties.}


The effective support size of a distribution $\Pp$ is defined as follows.
\begin{definition}[Definition 1.1 from \cite{goldreich2019complexity}]
The $\epsilon$-effective support size $\ess_\epsilon(\Pp)$ is defined as the smallest $n$ such that $\Pp$ is $\epsilon$-close in total variation distance to some distribution $\Pp'$ whose support has size $n$.
\end{definition}

As \citet[Proposition 1.6]{goldreich2019complexity} proves, it is not possible to efficiently estimate the $\epsilon$-effective support size. Instead, one has to use a relaxation of this notion.

\begin{definition}[Definition 1.2 from \cite{goldreich2019complexity}]
A value $\tilde{n}$ is a $(1+\gamma)$-approximate effective $[\epsilon_1,\epsilon_2]$-support size if $n \in [\ess_{\epsilon_2}, (1+\gamma)\ess_{\epsilon_1}]$.
\end{definition}

We prove that, in fact, one does need the error parameter $\gamma$ in the sense that the problem is efficiently solvable even for $\gamma=0$.

We now state two observations of \citet{goldreich2019complexity} that we will need. The first says that $\epsilon$-effective support size is equal to the support size after removing the least likely elements with a total mass of $\epsilon$. The second one says that we may decrease $\gamma$ to $0$ at the cost of an increase in $\beta$ by a factor of $\gamma/\epsilon$.

\begin{observation}[Observation 1.4 in \cite{goldreich2019complexity}]\label{obs:equivalent_definition}
If $\Pp$ has $\epsilon$-effective support size $n$, then $\Pp$ is $\epsilon$-close to a distribution that has support that consists of the $n$ heaviest elements in $\Pp$, with ties broken arbitrarily.
\end{observation}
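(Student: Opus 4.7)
The plan is to use the alternative formula for total variation distance, $\|\Pp - \Pp''\|_{TV} = 1 - \sum_x \min(\Pp(x), \Pp''(x))$, and argue that among all distributions supported on exactly $n$ elements, the one that minimizes the TV distance to $\Pp$ is obtained by placing mass on the $n$ heaviest elements of $\Pp$.

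More concretely, fix any support set $S$ of size $n$ and any distribution $\Pp''$ with $\mathrm{supp}(\Pp'') \subseteq S$. Then for every $x \in S$ we have $\min(\Pp(x), \Pp''(x)) \le \Pp(x)$, and for every $x \notin S$ we have $\min(\Pp(x), \Pp''(x)) = 0$, so
\[
\|\Pp - \Pp''\|_{TV} \;=\; 1 - \sum_{x} \min(\Pp(x), \Pp''(x)) \;\ge\; 1 - \sum_{x \in S} \Pp(x) \;=\; \sum_{x \notin S} \Pp(x).
\]
This lower bound is minimized over all $S$ of size $n$ by taking $S = T$, the set $T$ of $n$ heaviest elements of $\Pp$ (breaking ties arbitrarily), because any swap of an element of $T$ with an element outside $T$ can only decrease $\sum_{x \in S}\Pp(x)$.

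Next I would show that the lower bound is achievable when $S = T$. Let $\delta := 1 - \sum_{x \in T} \Pp(x) \ge 0$, and define $\Pp''$ on $T$ by $\Pp''(x) := \Pp(x) + \delta/n$ for each $x \in T$. This is a valid distribution on $T$, and it satisfies $\Pp''(x) \ge \Pp(x)$ for $x \in T$, so $\min(\Pp(x), \Pp''(x)) = \Pp(x)$ for $x \in T$ and $0$ otherwise. Hence $\|\Pp - \Pp''\|_{TV} = \sum_{x \notin T} \Pp(x)$, matching the lower bound.

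Finally, by hypothesis $\ess_\epsilon(\Pp) = n$, so there exists \emph{some} distribution $\Pp'$ of support size at most $n$ with $\|\Pp - \Pp'\|_{TV} \le \epsilon$. Applying the lower bound above to $S = \mathrm{supp}(\Pp')$ gives $\sum_{x \notin \mathrm{supp}(\Pp')} \Pp(x) \le \epsilon$, and since $T$ maximizes $\sum_{x \in S} \Pp(x)$ over $n$-element sets, we get $\sum_{x \notin T}\Pp(x) \le \sum_{x \notin \mathrm{supp}(\Pp')}\Pp(x) \le \epsilon$. Combined with the construction of $\Pp''$ above, this yields a distribution supported exactly on the $n$ heaviest elements of $\Pp$ that is $\epsilon$-close to $\Pp$, as required. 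There is no real obstacle here; the only thing one must be a little careful about is ties, but the argument works for any consistent choice of tiebreaker since $\sum_{x\in T}\Pp(x)$ is the same regardless of how ties are broken.
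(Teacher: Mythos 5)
The paper does not actually prove this statement; it imports it verbatim as Observation~1.4 of \citet{goldreich2019complexity}, so there is no in-paper proof to compare against. Your argument is correct and is the standard one: you use the identity $\|\Pp-\Pp''\|_{TV}=1-\sum_x\min(\Pp(x),\Pp''(x))$ to lower-bound the TV distance of any distribution supported on a size-$n$ set $S$ by $\sum_{x\notin S}\Pp(x)$, show this bound is minimized and attained when $S$ is the set $T$ of $n$ heaviest elements (by, e.g., bumping each $\Pp(x)$ for $x\in T$ up by $\delta/n$), and then invoke the hypothesis $\ess_\epsilon(\Pp)=n$ to conclude $\sum_{x\notin T}\Pp(x)\le\epsilon$. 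The only detail worth tightening is that $\Pp'$ in the definition of $\ess_\epsilon$ is promised to have support of size (at most) $n$, not exactly $n$; your comparison $\sum_{x\notin T}\Pp(x)\le\sum_{x\notin\mathrm{supp}(\Pp')}\Pp(x)$ still holds because augmenting $\mathrm{supp}(\Pp')$ to size $n$ only increases $\sum_{x\in S}\Pp(x)$, which is then $\le\sum_{x\in T}\Pp(x)$. Your handling of ties is also fine, since only $\sum_{x\in T}\Pp(x)$ matters and that is tie-invariant.
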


\begin{observation}[Observation 1.5 in \cite{goldreich2019complexity}] \label{obs:tradeoff_error_parameters}
If a random variable $X$ is a $(1+\gamma)$-factor approximation of the $\left[\epsilon_{1}, \epsilon_{2}\right]$-effective support size of $\Pp$, then $X / (1+\gamma)$ is an $\left[\epsilon_{1}, \epsilon_{2}+\gamma/(1+\gamma)\right]$-effective support size of $\Pp$. In particular, for $\gamma=\beta \epsilon$, we have $\gamma/(1+\gamma)<\beta \epsilon$. Therefore, if a random variable $X$ is a $(1+\beta \epsilon)$-factor approximation of the $\left[\epsilon, (1+\beta)\epsilon\right]$-effective support size of $\Pp$, then $X/(1+\gamma)$ is an $[\epsilon, (1+2\beta) \epsilon]$-effective support size of $\Pp$.
\end{observation}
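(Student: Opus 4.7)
The plan is to unpack the definition of a $(1+\gamma)$-factor approximation of the $[\epsilon_1,\epsilon_2]$-effective support size, namely that $X \in [\ess_{\epsilon_2}(\Pp),\,(1+\gamma)\ess_{\epsilon_1}(\Pp)]$, and then verify the two inclusions needed for $X/(1+\gamma)$ to be an $[\epsilon_1,\,\epsilon_2+\gamma/(1+\gamma)]$-effective support size, i.e.\ to lie in $[\ess_{\epsilon_2+\gamma/(1+\gamma)}(\Pp),\,\ess_{\epsilon_1}(\Pp)]$. The upper bound $X/(1+\gamma)\le \ess_{\epsilon_1}$ is just dividing the defining inequality by $1+\gamma$, so the whole argument reduces to showing the purely structural statement
\[
\ess_{\epsilon_2 + \gamma/(1+\gamma)}(\Pp) \;\le\; \frac{\ess_{\epsilon_2}(\Pp)}{1+\gamma},
\]
after which chaining $X/(1+\gamma)\ge \ess_{\epsilon_2}(\Pp)/(1+\gamma)$ with this inequality finishes the lower bound.

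To establish this structural claim I would set $n := \ess_{\epsilon_2}(\Pp)$ and invoke Observation~\ref{obs:equivalent_definition}: the $n$ heaviest elements of $\Pp$, with probabilities $p_1\ge p_2\ge \cdots \ge p_n$, have total mass $\sum_{i=1}^n p_i \ge 1-\epsilon_2$. Let $m := \lceil n/(1+\gamma) \rceil$. Because $p_{m+1},\ldots,p_n$ are the smallest of the $n$ values, their mean is at most the overall mean, giving $\sum_{i=m+1}^n p_i \le \tfrac{n-m}{n}\sum_{i=1}^n p_i \le \tfrac{n-m}{n}$. Hence the probability mass of $\Pp$ lying outside the top $m$ elements is at most $\epsilon_2 + (n-m)/n \le \epsilon_2 + \gamma/(1+\gamma)$, so by Observation~\ref{obs:equivalent_definition} we get $\ess_{\epsilon_2+\gamma/(1+\gamma)}(\Pp)\le m$, and $m \le n/(1+\gamma)$ up to the ceiling.

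For the ``in particular'' clause I would simply substitute $\gamma = \beta\epsilon$ and $(\epsilon_1,\epsilon_2)=(\epsilon,(1+\beta)\epsilon)$: since
\[
\frac{\beta\epsilon}{1+\beta\epsilon} \;<\; \beta\epsilon,
\]
dividing a $(1+\beta\epsilon)$-approximation of the $[\epsilon,(1+\beta)\epsilon]$-effective support size by $1+\beta\epsilon$ lands in $[\ess_{(1+\beta)\epsilon+\beta\epsilon/(1+\beta\epsilon)},\,\ess_\epsilon] \subseteq [\ess_{(1+2\beta)\epsilon},\,\ess_\epsilon]$, which is exactly the statement.

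The only genuinely delicate point is the handling of integrality: $n/(1+\gamma)$ need not be an integer, and taking $m=\lceil n/(1+\gamma)\rceil$ introduces at most a $1/n$ slack in $(n-m)/n$. This is the main obstacle I would expect to have to address, and I would do so either by absorbing it into the strict inequality $\beta\epsilon/(1+\beta\epsilon)<\beta\epsilon$ (which provides positive slack that dominates $1/n$ for $n$ sufficiently large, and the $n=O(1)$ case is trivially handled), or simply by interpreting $X/(1+\gamma)$ as a real-valued approximator, for which the non-integer inequality $\ess_{\epsilon_2+\gamma/(1+\gamma)}\le n/(1+\gamma)$ is what is actually needed.
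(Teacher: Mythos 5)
The paper states this result as Goldreich's Observation 1.5 and gives no proof of its own, so there is no in-paper argument to compare against. Your reconstruction proceeds by reducing to the structural claim $\ess_{\epsilon_2+\gamma/(1+\gamma)} \le \ess_{\epsilon_2}/(1+\gamma)$ and proving it by averaging the top $n=\ess_{\epsilon_2}$ probabilities via Observation~\ref{obs:equivalent_definition}; this is almost certainly the intended route, and the upper-bound half and the ``in particular'' substitution are handled correctly.

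However, the integrality point you flag as ``the only genuinely delicate point'' is not actually resolved by either of your two proposals, and with the definitions as stated in this paper the structural inequality can genuinely fail by rounding. Your argument with $m=\lceil n/(1+\gamma)\rceil$ yields $\ess_{\epsilon_2+\gamma/(1+\gamma)} \le m$, which can exceed $n/(1+\gamma)$ by almost $1$. Concretely, let $\Pp$ be uniform on $10$ elements, $\epsilon_2=0$, $\gamma=1/2$: then $\ess_{\epsilon_2}=10$, $n/(1+\gamma)=20/3\approx 6.67$, but $\ess_{\gamma/(1+\gamma)}=\ess_{1/3}=\lceil 10\cdot 2/3\rceil=7>20/3$. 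Taking $X=\ess_{\epsilon_2}=10$, which is a legitimate $(1+\gamma)$-approximation, gives $X/(1+\gamma)=20/3<\ess_{\epsilon_2+\gamma/(1+\gamma)}$, contradicting the claimed conclusion. Your first fix (absorbing the slack into the strict inequality $\gamma/(1+\gamma)<\beta\epsilon$) only covers the $+1$ rounding loss when $n\gtrsim(\beta\epsilon)^{-2}$, and the assertion that ``the $n=O(1)$ case is trivially handled'' is unsupported—small $n$ is precisely where the counterexample lives, since $\ess$ is always at least $1$ while $n/(1+\gamma)$ can drop below $1$. Your second fix is circular: $\ess_{\epsilon_2+\gamma/(1+\gamma)}\le n/(1+\gamma)$ is exactly the inequality that fails, and reinterpreting $X/(1+\gamma)$ as real-valued does not change the target. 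The upshot is that the observation holds only up to an additive $+1$ in support size (harmless for how the paper applies it, and likely what Goldreich intends), but a proof asserting the exact inequality, as yours does, has a real gap at this step.
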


\subsection{Distribution testing settings.}
The model for a distribution $\Pp$ on a universe $U$ is defined as a pair of oracles $(\text{SAMP}_{\Pp}, \text{EVAL}_{\Pp})$ which are in turn defined as follows. Upon being queried, $\text{SAMP}_{\Pp}$ returns a sample from $\Pp$, independent from all previously returned samples. $\text{EVAL}_{\Pp}(x)$ for $x \in U$ returns the probability $p(x)$ of $x$ being sampled from $\Pp$.

The probability-revealing samples model was defined by \citet{onak2018probability}. For a distribution $\Pp$, we define a probability-revealing oracle $REV_{\Pp}$ as an oracle that returns $(x, p(x))$ for $x \sim \Pp$, independently of all previous calls of the oracle.
The difference between these two settings that one may also use the $\text{EVAL}$ oracle on items that have not been sampled in the dual access model, but not in the probability-revealing samples model.
Hence, the dual access model in general is more powerful than the probability-revealing samples model.
Our algorithm in fact will only query probabilities for elements that have already been sampled, so it works both in the dual access model and the probability-revealing samples model.

We also briefly remark that, while not actually relevant for our upper bounds, for effective support size estimation, the query complexities in these two models are in fact equivalent.
For symmetric properties, one may assume that we use $\text{EVAL}$ either on sampled items, or on items selected uniformly at random from the not-yet-seen part of the support.\footnote{This may be argued roughly as follows: we take a uniformly random permutation $\pi$ of the universe. By symmetry, this does not affect correctness. At the same time, no matter the distribution of $x$ that the algorithm queries, we have $\text{EVAL}_{\pi^{-1}(\Pp)}(x) = \text{EVAL}_{\Pp}(\pi^{-1}(x))$, but $\pi^{-1}(x)$ has conditional distribution of being uniform on the not-yet-sampled items. This holds even for adaptive queries, as after each adaptive query we have no information on $\pi$ outside of the elements we sampled/queried.} 
In addition, if we want the query complexity to have no dependence on the universe size $|U|$, sampling uniformly from the not-yet-seen part of the support is useless because one can make $|U|$ arbitrarily large by adding elements of probability $0$, and sampling uniformly means that with overwhelming probability we will only see elements with probability $0$.

\subsection{Notions from probability theory.}
First, we note that we use $\Pp$ to denote a distribution, and $p(x)$ to denote the probability of sampling $x$ from $\Pp$. We also use $\BI$ to denote an indicator random variable. In other words, for an event $E$, $\BI[E] = 1$ if $E$ occurs and $\BI[E] = 0$ otherwise.

For a real-valued random variable $X$, we define $\sup[X] = \sup_{t \in \BR} \{t: \BP(X \ge t) > 0\}$: $\sup[X]$ roughly represents the largest real number that $X$ may take. If $\BP(X \ge t) > 0$ for all $t \in \BR$, then $\sup[X] = \infty$. In addition, if $X$ is conditioned on some variable $Y$, we can define $\sup[X|Y = y] = \sup_{t \in \BR} \{t: \BP(X \ge t|Y = y) > 0\}$.

In this paper, we need some common notions from probability theory. Two note-worthy ones are that of \textit{total variation distance} and \textit{stochastic domination}. The total variation distance of two distributions $\Pp_1, \Pp_2$ supported on $U$ can be for finite $U$ written as
\[
\|\Pp_1 - \Pp_2\|_{TV} = \frac{1}{2}\sum_{x \in U} |p_1(x) - p_2(x)|
\]

We say that a real random variable $X_1$ stochastically dominates a real random variable $X_2$ if there exists a coupling $(X_1',X_2')$ such that $X_1' \geq X_2'$ almost surely. We also use an equivalent definition, which states $X_1$ stochastically dominates $X_2$, iff for any value $\phi$, we have $\BP[X_1 \geq \phi] \geq \BP[X_2 \geq \phi]$.

\section{Effective support size estimation}
We assume an arbitrary total ordering $\leq$ on the support. This may be assumed WLOG and without seeing any samples; for instance, each element will have some number or categorical label associated with it, so we may set $\leq$ as the natural lexicographic order on the labels.

We then define a total ordering $\prec$ such that $x_1 \prec x_2$ if $p(x_1) < p(x_2)$ or if $p(x_1) = p(x_2)$ and $x_1 < x_2$. (We also define $x_1 \preccurlyeq x_2$ to mean $x_1 \prec x_2$ or $x_1 = x_2$).
We define the $\epsilon$-\emph{quantile} of a distribution $\Pp$ over a universe $U$ to be the $x_\epsilon \in U$ that is smallest w.r.t. $\prec$ such that $\BP_{X \sim \Pp}(X \preccurlyeq x_\epsilon) > \epsilon$.
One can verify (using Observation~\ref{obs:equivalent_definition}) that $\ess_\epsilon$ equals the number of elements that are $\succcurlyeq x_\epsilon$ if $x_\epsilon$ is the $\epsilon$-quantile.
In addition, if given a sample $R$ from the distribution, we define the $\epsilon$-quantile of $R$ w.r.t. $\prec$ to be the smallest $x \in U$ (under the $\prec$ ordering) such that $\#\{X \in R: X \preccurlyeq x\} > \epsilon \cdot |R|$.

Finally, for any $0 < \epsilon < 1$, define $x_\epsilon$ as the $\epsilon$-quantile of $\Pp$, and $p_\epsilon := p(x_\epsilon)$.

Given this, we can now describe our algorithm, described in \Cref{alg:main}. Indeed, our algorithm is very simple and only requires $4$ lines of pseudocode description. We will assume WLOG that $\beta \le 0.2$ and $\gamma \le 0.2$ throughout the analysis. In addition, we assume WLOG that $(1+\beta) \cdot \epsilon < 1$, as if $(1+\beta) \cdot \epsilon \ge 1$, then any distribution with support $1$ (i.e., a point mass on any element) has total variation distance at most $1 \le (1+\beta) \cdot \epsilon$ from $\Pp$, so we may output $1$ as our estimate of the effective support size.
\begin{algorithm}
$R \leftarrow $ sample of size $\frac{180}{\beta^2 \epsilon}$\\
$x \leftarrow$ $(1+\beta/2) \epsilon$-quantile of $R$ w.r.t.~$\prec$\\
$y_1, \dots, y_t \leftarrow$ sample of size $t = \frac{500}{\epsilon \beta \gamma^2}$\\
\Return{$(1+\gamma/2) S$, where $S = \frac{1}{t} \sum_{i=1}^{t} \frac{\mathbb{I}[y_i \succcurlyeq x]}{p(y_i)} $}\\

\caption{Get a $(\gamma,\beta)$-approximate estimate of the $\epsilon$-effective support size.}
\label{alg:main}
\end{algorithm}

First, we show that the $x$ created in \Cref{alg:main} is an approximate $\epsilon$ quantile of $\Pp$.

\begin{lemma} \label{lem:first_2_lines}
    Let $x$ represent the output of the second line of \Cref{alg:main}. With probability at least $\frac{9}{10},$ there exists $\epsilon^* \in [(1+\beta/4) \epsilon, (1 + 3\beta/4) \epsilon]$ such that $x$ is the $\epsilon^*$ quantile of $\Pp$.
\end{lemma}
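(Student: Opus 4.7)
My plan is to reduce the claim to concentration of the empirical CDF at two \emph{deterministic} witness elements determined by $\Pp$ alone. Let $F(y) := \BP_{X \sim \Pp}(X \preccurlyeq y)$ and $F^-(y) := \BP_{X \sim \Pp}(X \prec y)$, and let $\hat F$ be the corresponding empirical CDF built from the sample $R$. By unpacking the definition of the $\epsilon^*$-quantile, $x$ is the $\epsilon^*$-quantile of $\Pp$ if and only if $F^-(x) \le \epsilon^* < F(x)$, so some such $\epsilon^*$ lies in $[(1+\beta/4)\epsilon,(1+3\beta/4)\epsilon]$ precisely when the event $E := \{\,F^-(x) \le (1+3\beta/4)\epsilon \ \text{and}\ F(x) > (1+\beta/4)\epsilon\,\}$ holds. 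The goal thus reduces to showing $\BP(\neg E) \le 1/10$.

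The main obstacle is that $x$ itself is random, so Chernoff cannot be applied to it directly. I sidestep this by introducing witnesses that depend only on $\Pp$. For the first failure mode $F^-(x) > (1+3\beta/4)\epsilon$, let $z^-$ be the smallest element of $U$ with $F(z^-) > (1+3\beta/4)\epsilon$ (which exists because $(1+\beta)\epsilon < 1$ and $F$ reaches $1$). Since $F^-(x)$ is achieved by the immediate $\prec$-predecessor of $x$, this predecessor already has $F$-value above $(1+3\beta/4)\epsilon$, so $z^- \prec x$; combined with the empirical quantile property $\hat F^-(x) \le (1+\beta/2)\epsilon$, monotonicity of $\hat F$ gives $\hat F(z^-) \le (1+\beta/2)\epsilon$. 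For the second failure mode $F(x) \le (1+\beta/4)\epsilon$, let $w$ be the largest element with $F(w) \le (1+\beta/4)\epsilon$ (if no such $w$ exists, the event is vacuous); then $x \preccurlyeq w$, so $\hat F(w) \ge \hat F(x) > (1+\beta/2)\epsilon$. In both cases a random event has been converted to a deviation event at a fixed element.

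The final step is to apply a multiplicative Chernoff bound to $m \hat F(z^-) \sim \mathrm{Binomial}(m, F(z^-))$ and $m \hat F(w) \sim \mathrm{Binomial}(m, F(w))$, where $m = 180/(\beta^2 \epsilon)$. Each bad event requires an additive deviation of at least $m\beta\epsilon/4$ from the mean, while the mean lives on the scale of $m\epsilon$, so the relative deviation is of order $\beta$; the standard Chernoff bound gives each probability on the order of $\exp(-\Theta(m\beta^2\epsilon)) = \exp(-\Theta(1))$, and working through the constants (using $\beta \le 0.2$) makes each at most $1/20$. A union bound then yields $\BP(\neg E) \le 1/10$. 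The only subtlety I expect is the sub-case where $F(w)$ is much smaller than $(1+\beta/4)\epsilon$: there the multiplicative deviation $\delta$ grows and the Chernoff tail becomes exponentially stronger, so the worst case for the bound is $F(w) = (1+\beta/4)\epsilon$, which is what the constant $180$ is tuned against.
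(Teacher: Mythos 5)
Your proposal is correct and follows essentially the same route as the paper's own proof: both reduce the two failure modes to tail bounds on a binomial count at a deterministic witness (your $z^-$ and $w$ are, up to a $\prec$/$\preccurlyeq$ shift, the paper's quantiles $x_{(1+3\beta/4)\epsilon}$ and $x_{(1+\beta/4)\epsilon}$), apply a multiplicative Chernoff bound tuned so each failure has probability at most $1/20$, and union-bound. Two minor imprecisions worth noting: the step ``$F^-(x)$ is achieved by the immediate $\prec$-predecessor'' is unnecessary (and can fail for infinite support) --- the cleaner argument is that $z^- \succcurlyeq x$ would give $F^-(x) \le F^-(z^-) \le (1+3\beta/4)\epsilon$ by minimality of $z^-$; and a \emph{largest} $w$ with $F(w) \le (1+\beta/4)\epsilon$ need not exist, which the paper avoids by always using a minimal quantile element and counting $\#\{X \in R : X \prec x_{(1+\beta/4)\epsilon}\}$ in place of your $\#\{X \in R : X \preccurlyeq w\}$.
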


\begin{remark}
    We say ``there exists $\epsilon^*$'' as the choice of $\epsilon^*$ may not be unique. For instance, if $\Pp$ were a point mass on a single element $x$, then $x$ is the $\epsilon$ quantile for all $0 < \epsilon < 1$.
\end{remark}

\begin{proof}
    Let $x_{(1+\beta/4)\epsilon}$ represent the $(1+\beta/4)\epsilon$ quantile of $\Pp$. If $x \prec x_{(1+\beta/4)\epsilon},$ then $\#\{X \in R: X \preccurlyeq x\} > (1+\beta/2)\epsilon \cdot |R|$ by definition, so $k_1 := \#\{X \in R: X \prec x_{(1+\beta/4)\epsilon}\} > (1+\beta/2) \epsilon \cdot |R|.$ However, $k_1 \sim \text{Bin}(|R|, \eta)$ where $\eta \le (1+\beta/4) \epsilon$ by definition. Hence, the probability that $x \prec x_{(1+\beta/4)\epsilon}$ is at most $\BP\left(\text{Bin}(|R|, (1+\beta/4)\epsilon) > (1+\beta/2) \cdot |R|\right)$, which by the Chernoff bound is at most
\[\exp\left(-((\beta/4)/(1+\beta/4))^2 \cdot |R| \cdot (1+\beta/4)\epsilon/3\right) \le \exp\left(-\beta^2 \cdot |R| \cdot \epsilon/60\right) \le 1/20,\]
    where the first inequality follows since we assumed $\beta \le 0.2$ and the last inequality follows since $|R| = \frac{180}{\beta^2 \epsilon}.$
    
    Similarly, we let $x_{(1+3\beta/4)\epsilon}$ represent the $(1+3\beta/4)\epsilon$ quantile of $\Pp$. If $x \succ x_{(1+3\beta/4)\epsilon}$, then $k_2 := \#\{X \in R: X \preccurlyeq x_{(1+3\beta/4)\epsilon}\} \le \#\{X \in R: X \prec x\} \le (1+\beta/2) \epsilon \cdot |R|.$ As $k_2 \sim \text{Bin}(|R|, \eta)$ for some $\eta > (1+3\beta/4) \epsilon$, the probability that $x \succ x_{(1+3\beta/4)\epsilon}$ is at most $\BP\left(\text{Bin}(|R|, (1+3\beta/4)\epsilon) \le (1+\beta/2) \cdot |R|\right)$, which by the Chernoff bound is at most
\[\exp\left(-((\beta/4)/(1+3\beta/4))^2 \cdot |R| \cdot (1+3\beta/4)\epsilon/3\right) \le \exp\left(-\beta^2 \cdot |R| \cdot \epsilon/60\right) \le 1/20.\]

    So, with probability at least $9/10$, $x_{(1+\beta/4)\epsilon} \preccurlyeq x \preccurlyeq x_{(1+3\beta/4)\epsilon}.$ In this case, there must exist  $\epsilon^* \in [(1+\beta/4) \epsilon, (1 + 3\beta/4) \epsilon]$ such that $x$ is the $\epsilon^*$ quantile of $\Pp$.
\end{proof}

We next prove the following auxiliary lemma.

\begin{lemma} \label{lem:quantile_ess}
For any $\epsilon < 1$, recall that $x_\epsilon$ represents the $\epsilon$ quantile of $\Pp$ and $p_\epsilon = p(x_\epsilon)$.
Then, for any $0 < \epsilon, \alpha < 1,$ it holds that $\ess_{(1-\alpha)\epsilon} \geq \frac{\epsilon \alpha}{p_\epsilon}$.
\end{lemma}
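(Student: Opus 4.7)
The plan is to argue directly from the combinatorial description of $\ess_{(1-\alpha)\epsilon}$ as the number of elements that lie at or above the $(1-\alpha)\epsilon$-quantile in the $\prec$ order (as noted in the paragraph following the definition of $\prec$, using Observation~\ref{obs:equivalent_definition}). So set $x^* := x_{(1-\alpha)\epsilon}$ and $x^{**} := x_\epsilon$. Since $(1-\alpha)\epsilon < \epsilon$, one immediately has $x^* \preccurlyeq x^{**}$, and I want to count the elements sitting in the ``window'' between them and show this window alone already contains $\ge \epsilon\alpha/p_\epsilon$ items.

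First I would lower-bound the total $\Pp$-mass of the window. By the definition of the $\epsilon$-quantile, the smallest element with $\BP(X \preccurlyeq x) > \epsilon$ is $x^{**}$; thus $\BP(X \preccurlyeq x^{**}) > \epsilon$, and similarly $\BP(X \prec x^*) \le (1-\alpha)\epsilon$ (otherwise some element strictly preceding $x^*$ would already exceed the $(1-\alpha)\epsilon$ threshold). Subtracting yields
\[
\BP\bigl(x^* \preccurlyeq X \preccurlyeq x^{**}\bigr) \;=\; \BP(X \preccurlyeq x^{**}) - \BP(X \prec x^*) \;>\; \epsilon - (1-\alpha)\epsilon \;=\; \alpha\epsilon.
\]

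Next I would upper-bound each individual probability in the window. Any $y$ with $y \preccurlyeq x^{**}$ satisfies $p(y) \le p(x^{**}) = p_\epsilon$ by the very definition of the order $\prec$ (if $p(y) > p_\epsilon$, then $x^{**} \prec y$). Combining this with the mass bound, the number of elements $y$ with $x^* \preccurlyeq y \preccurlyeq x^{**}$ must exceed $\alpha\epsilon/p_\epsilon$. Since every such $y$ is in particular $\succcurlyeq x^*$, it is counted in $\ess_{(1-\alpha)\epsilon}$, giving $\ess_{(1-\alpha)\epsilon} > \alpha\epsilon/p_\epsilon$, which is stronger than the claimed bound.

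There is really no hard step here; the only subtlety is keeping the $\prec$ versus $\preccurlyeq$ boundary conditions straight and remembering that $x_\epsilon$ is defined as the \emph{smallest} element crossing the threshold, so the quantile itself is included in $\ess_\epsilon$. I would write the two endpoint inequalities carefully, subtract, then divide by $p_\epsilon$; a single displayed computation suffices.
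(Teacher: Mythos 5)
Your proof is correct and follows essentially the same route as the paper's: bound the probability mass of the window between the $(1-\alpha)\epsilon$-quantile and the $\epsilon$-quantile from below by $\alpha\epsilon$, bound each individual probability in the window from above by $p_\epsilon$, divide to count the elements, and then invoke the combinatorial characterization of $\ess$ in terms of the $\prec$-order. The only differences are cosmetic (the paper sorts the universe and counts via indices $a,b$ while you count the window directly).
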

\begin{proof}
    Assume without loss of generality that the elements are sorted in increasing order of probability, i.e., $p(x_1) \le p(x_2) \le \cdots \leq p(x_n)$. We may also assume all elements have nonzero probability by removing all elements with $0$ probability. (Indeed, this does not affect $\ess_{(1-\alpha)\epsilon}$ or $p_\epsilon$.) For simplicity, we define $a := x_{(1-\alpha)\epsilon}$ and $b := x_{\epsilon}$.
    Then, for $X \sim \Pp$, $\BP(X \prec a) \le (1-\alpha)\epsilon < \BP(X \preccurlyeq a)$, and $\BP(X \prec b) \le \epsilon < \BP(X \preccurlyeq b)$. Importantly, this means $\BP(a \preccurlyeq X \preccurlyeq b) = \BP(X \preccurlyeq b) - \BP(X \prec a) > \alpha \cdot \epsilon$. However, $p(b) = p_\epsilon,$ and $p(c) \le p_\epsilon$ for any $a \preccurlyeq c \preccurlyeq b.$ Thus, $(b-a+1) \cdot p_\epsilon \ge \BP(a \preccurlyeq X \preccurlyeq b) > \alpha \cdot \epsilon$, which means that $b-a+1 > \frac{\alpha \cdot \epsilon}{p_\epsilon}$.
    
    Next, we remark that since $a$ is the $(1-\alpha)\epsilon$ quantile of $\Pp$, the $(1-\alpha)\epsilon$ effective support size is precisely the number of elements which are $\succcurlyeq a$. Since all elements between $a$ and $b$ in the order fall in this category, we have that $\ess_{(1-\alpha)\epsilon} \ge b-a+1$.
    
    To summarize, we have $\ess_{(1-\alpha)\epsilon} \ge b-a+1 \ge \frac{\alpha \cdot \epsilon}{p_\epsilon}$, which completes the proof.
\end{proof}

We are now ready to prove our main result.

\begin{theorem} \label{thm:main}
Suppose that $0 < \epsilon < 1$ and $0 < \beta, \gamma \le 0.2$. Then, with probability at least $2/3$, \Cref{alg:main} returns a $(1+ \gamma)$-factor approximation to the $[\epsilon, (1+\beta)\epsilon]$ effective support size. Its sample complexity is $O(\frac{1}{\beta^2 \epsilon} + \frac{1}{\epsilon \beta \gamma^2})$.
\end{theorem}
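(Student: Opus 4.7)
The plan is to first invoke \Cref{lem:first_2_lines}, so that, conditioning on an event of probability at least $9/10$, we have $x = x_{\epsilon^*}$ for some $\epsilon^* \in [(1+\beta/4)\epsilon, (1+3\beta/4)\epsilon]$. Since then $\epsilon \le \epsilon^* \le (1+\beta)\epsilon$, the value $\ess_{\epsilon^*}$ lies in the target range $[\ess_{(1+\beta)\epsilon}, \ess_{\epsilon}]$, so the task reduces to showing that $(1+\gamma/2)\,S$ is close to $\ess_{\epsilon^*}$. By the Hansen--Hurwitz-style identity already displayed in Section~1.1, each summand has expectation $\ess_{\epsilon^*}$, hence $\BE[S] = \ess_{\epsilon^*}$.

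For the variance I will use $\Var[Z] \le \BE[Z]\,\sup[Z]$ on each summand $Z = \BI[y \succcurlyeq x]/p(y)$ together with $\sup[Z] \le 1/p_{\epsilon^*}$, giving $\Var[S] \le \ess_{\epsilon^*}/(t\,p_{\epsilon^*})$. Applying \Cref{lem:quantile_ess} with the largest $\alpha$ such that $(1-\alpha)\epsilon^* \ge \epsilon$ (namely $\alpha = 1 - \epsilon/\epsilon^*$, which is at least $\beta/5$ in our parameter range) bounds $1/p_{\epsilon^*}$ by $O(\ess_\epsilon/(\epsilon\beta))$, so $\Var[S] = O(\gamma^2\,\ess_{\epsilon^*}\ess_\epsilon)$ once $t = 500/(\epsilon\beta\gamma^2)$ is substituted.

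The upper bound $(1+\gamma/2)\,S \le (1+\gamma)\ess_\epsilon$ then follows directly from Chebyshev on $S$: the typical deviation is $O(\gamma)\sqrt{\ess_{\epsilon^*}\ess_\epsilon} \le O(\gamma)\,\ess_\epsilon$ (using $\ess_{\epsilon^*} \le \ess_\epsilon$), and for suitable constants this suffices with probability at least $8/9$.

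The lower bound $(1+\gamma/2)\,S \ge \ess_{(1+\beta)\epsilon}$ is the main obstacle, because $\Var[S]$ can be too large relative to the target $\ess_{(1+\beta)\epsilon}^2$ for direct Chebyshev on $S$ to give a useful lower tail bound: the variance depends on $\ess_\epsilon$, which may be much larger than $\ess_{(1+\beta)\epsilon}$. Following the sketch in Section~1.1, I will exhibit a surrogate $Y \le X$ almost surely, e.g.\ $Y = \BI[y \succcurlyeq x_{(1+\beta)\epsilon}]/p(y)$, and apply Chebyshev to the empirical average $S_Y = (1/t)\sum_i Y_i$. The same identity gives $\BE[Y] = \ess_{(1+\beta)\epsilon}$, and \Cref{lem:quantile_ess} applied at level $(1+\beta)\epsilon$ with $\alpha = \beta/(1+\beta)$ bounds $\Var[Y]$ by $O(\ess_{(1+\beta)\epsilon}\ess_\epsilon/(\epsilon\beta))$. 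Since $Y \le X$ implies $S_Y \le S$, Chebyshev on $S_Y$ yields $S \ge S_Y \ge \ess_{(1+\beta)\epsilon}/(1+\gamma/2)$ whenever $\ess_\epsilon/\ess_{(1+\beta)\epsilon}$ is bounded by a sufficiently small constant. The remaining regime, $\ess_\epsilon \gg \ess_{(1+\beta)\epsilon}$, is handled by a short case split: in this regime $\ess_{\epsilon^*}$ itself exceeds $\ess_{(1+\beta)\epsilon}$ by a large multiplicative factor, so direct Chebyshev on $S$ (whose mean is $\ess_{\epsilon^*}$) already has enough slack on the lower side. A final union bound over the event of \Cref{lem:first_2_lines} and the two Chebyshev failure events keeps the total failure probability below $1/3$.
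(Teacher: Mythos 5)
Your upper‐bound argument (that $(1+\gamma/2)S \le (1+\gamma)\ess_\epsilon$) is correct and in fact a bit cleaner than the paper's: you invoke Lemma~\ref{lem:quantile_ess} directly to bound $1/p_{\epsilon^*}$ by $O(\ess_\epsilon/(\epsilon\beta))$ and apply Chebyshev once, where the paper instead splits into two cases depending on how $\ess_{\epsilon^*}$ compares to $\epsilon^*\beta/(8p)$. This part of your proposal is a fine alternative.

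The lower bound is where the argument breaks. Your surrogate $Y = \BI[y\succcurlyeq x_{(1+\beta)\epsilon}]/p(y)$ has $\BE[Y]=\ess_{(1+\beta)\epsilon}$ and $\Var[Y]\le\ess_{(1+\beta)\epsilon}/p_{(1+\beta)\epsilon}$, so its \emph{relative} variance is $1/(p_{(1+\beta)\epsilon}\,\ess_{(1+\beta)\epsilon})$, which can be arbitrarily large (the quantity $p_{(1+\beta)\epsilon}\,\ess_{(1+\beta)\epsilon}$ has no nontrivial lower bound, only an upper bound of~$1$). You then propose to rescue the bad regime by direct Chebyshev on $S$, claiming that there $\ess_{\epsilon^*}\gg\ess_{(1+\beta)\epsilon}$. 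Neither half of this fallback holds. First, $\ess_\epsilon\gg\ess_{(1+\beta)\epsilon}$ does not force $\ess_{\epsilon^*}\gg\ess_{(1+\beta)\epsilon}$: one can place a large block of tiny-probability elements entirely below the $(1+\beta/4)\epsilon$-quantile, so that $\ess_\epsilon$ is huge while $x_{\epsilon^*}=x_{(1+\beta)\epsilon}$ and $\ess_{\epsilon^*}=\ess_{(1+\beta)\epsilon}$. Second, even when $\ess_{\epsilon^*}>\ess_{(1+\beta)\epsilon}$, direct Chebyshev on $S$ can still fail because $\Var[S]\le\ess_{\epsilon^*}/(t\,p_{\epsilon^*})$ and $p_{\epsilon^*}$ can be arbitrarily small — the paper explicitly flags this (``the variance of our estimator can be arbitrarily large\ldots we thus cannot use Chebyshev's inequality''). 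Concretely, take one element of mass $\approx 1-\epsilon^*$, one ``medium'' element of mass $p=10^{-6}$, and $\approx\epsilon^*/10^{-8}$ tiny elements of mass $10^{-8}$ each; then with positive probability $x$ is the medium element, giving $\ess_{\epsilon^*}=2$, $\ess_{(1+\beta)\epsilon}=1$, and $\Var[S]\approx 1/(tp)$, for which Chebyshev's lower tail falls well below $\ess_{(1+\beta)\epsilon}$ for reasonable parameters (e.g.\ $\epsilon=\beta=\gamma=0.1$). The paper's fix, which you are missing, is to give the surrogate a middle tier: set $Y'_i=1/p_{(1+\beta)\epsilon}$ whenever $x\preccurlyeq y_i\prec x_{(1+\beta)\epsilon}$. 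This keeps $\sup[Y'_i]\le 1/p_{(1+\beta)\epsilon}$ while raising $\BE[Y'_i]$ to at least $\max(\ess_{(1+\beta)\epsilon},\,\beta\epsilon/(4p_{(1+\beta)\epsilon}))$, which makes $\Var[Y'_i]\le\BE[Y'_i]/p_{(1+\beta)\epsilon}\le 4\BE[Y'_i]^2/(\beta\epsilon)$ unconditionally. Without that extra term there is no case split that closes the lower bound with Chebyshev.
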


\begin{proof}
The sample complexity is clearly as claimed. We thus focus on correctness. Recall that we may assume WLOG that $(1+\beta) \cdot \epsilon < 1$.
We will show that $(1-0.4 \gamma) \ess_{(1+\beta)\epsilon} \le S \le (1 + 0.4 \gamma) \ess_{\epsilon},$ where $S$ is defined in Line 4 of \Cref{alg:main}. Since our final estimate is $(1 + 0.5 \gamma) S$, and since $1 \le (1-0.4 \gamma) \cdot (1+0.5 \gamma)$ and $(1+0.4 \gamma) \cdot (1+0.5 \gamma) \le 1+\gamma$ for $\gamma \le 0.2$, this implies our final estimate is in the range $[\ess_{(1+\beta)\epsilon}, (1+\gamma) \ess_{\epsilon}]$, as desired.

Recall that $x$ is the element generated in Line 2 of \Cref{alg:main}. Define $p := p(x)$, and let $\epsilon^*$ be such that $x$ is the $\epsilon^*$ quantile of $\Pp$. Let $\mathcal{E}_1$ denote the event that we can choose $\epsilon^* \in [(1+\beta/4)\epsilon, (1+3\beta/4)\epsilon]$. (By \Cref{lem:first_2_lines}, $\mathcal{E}_1$ holds with at least $9/10$ probability.)
Consider the random variable $Y = \BI[X \succcurlyeq x]/p(X)$ for $X \sim \Pp$. We have that
\[
\BE[Y|\epsilon^*] = \sum_{y \in U} p(y) \frac{\BI[y \succcurlyeq x]}{p(y)} = \#\{y \in U: y \succcurlyeq x\} = \ess_{\epsilon^*}.
\]
At the same time, note that $\sup[Y|\epsilon^*] \le \frac{1}{p}$ (where we recall $p := p(x)$ and $x = x_{\epsilon^*}$) since $\BI[X \succcurlyeq x] = 1$ implies $p(X) \ge p$. Therefore, 
\[
\Var[Y|\epsilon^*] \leq \BE[Y^2|\epsilon^*] \leq \BE[Y|\epsilon^*] \sup[Y|\epsilon^*] \leq \ess_{\epsilon^*}/p.
\]
We thus have
\[
\BE[S|\epsilon^*] = \BE[Y|\epsilon^*] = \ess_{\epsilon^*},
\]
recalling that $S$ is an average of $t$ copies of the random variable $Y$.
It also holds that
\[
\Var[S|\epsilon^*] = \Var[Y|\epsilon^*]/t \leq \ess_{\epsilon^*}/(t p),
\]
where we note that $t = \frac{500}{\epsilon \beta \gamma^2}$ depends on $\epsilon$ but not on $\epsilon^*$. Conditioning on $\epsilon^* \geq \epsilon$ (equivalently, $t \geq \frac{500}{\epsilon^* \beta \gamma^2}$, which holds on $\mathcal{E}_1$), we have that
\[
\Var[S|\epsilon^*, \epsilon^* \ge \epsilon] \leq \epsilon^* \beta \gamma^2 \ess_{\epsilon^*}/(500p).
\]
Therefore, by the (conditional) Chebyshev inequality, assuming $\mathcal{E}_1$ and conditioning on $\epsilon^* \in [(1+\beta/4)\epsilon, (1+3\beta/4) \epsilon]$, it holds with probability at least $9/10$ that
%
\[
\left|S - \ess_{\epsilon^*}\right| \le \sqrt{\epsilon^*\cdot \beta \cdot \gamma^2/50} \cdot \sqrt{\ess_{\epsilon^*}/p}.
\]
We call the event when this is the case $\mathcal{E}_2$.

We split the rest of the analysis into two main cases. The first case is when $\ess_{\epsilon^*} \ge \frac{\epsilon^* \beta}{8 p},$ and the second case is when $\ess_{\epsilon^*} \le \frac{\epsilon^* \beta}{8 p}.$

We start by the simple case when $\ess_{\epsilon^*} \geq \frac{\epsilon^* \beta}{8 p}$. In this case, the value $\ess_{\epsilon^*}$ is relatively large, and this already guarantees a good approximation. 
Since $\ess_{\epsilon^*} \geq \frac{\epsilon^* \beta}{8 p}$, we have that $1/p \leq \frac{8\ess_{\epsilon^*}}{\epsilon^* \beta}$.
It therefore holds on $\mathcal{E}_2$ that
\[
\left|S - \ess_{\epsilon^*}\right| \le \sqrt{\epsilon^*\cdot \beta \cdot \gamma^2/50} \cdot \sqrt{\ess_{\epsilon^*}/p} \le 0.4 \gamma \cdot \ess_{\epsilon^*}.
\]

We now consider the case when $\ess_{\epsilon^*} \leq \frac{\epsilon^* \beta}{8 p}$. 
It holds, by \Cref{lem:quantile_ess}, that $\ess_{(1-\beta/4)\epsilon^*} \geq \frac{\epsilon^*\beta}{4 p}$, and it therefore holds $\ess_{\epsilon^*} \leq \ess_{(1-\beta/4)\epsilon^*}/2$. We may thus bound
\begin{align*}
\ess_{\epsilon^*} + \sqrt{\epsilon^* \cdot \beta \cdot \gamma^2/50} \cdot \sqrt{\ess_{\epsilon^*}/p} \leq& \ess_{\epsilon^*} + \sqrt{\epsilon^* \beta/8} \cdot \sqrt{\ess_{\epsilon^*}/p} \\
\leq&  \ess_{\epsilon^*} + \sqrt{\ess_{\epsilon^*} \ess_{(1-\beta/4)\epsilon^*}/2} \\ \leq& \ess_{(1-\beta/4)\epsilon^*}/2 + \sqrt{\ess_{(1-\beta/4)\epsilon^*} \cdot \ess_{(1-\beta/4)\epsilon^*}/4} \\=& \ess_{(1-\beta/4) \epsilon^*} \leq \ess_\epsilon
\end{align*}
where the last inequality holds on $\mathcal{E}_1$.

Next, we need to argue that $S \geq (1-0.4 \gamma) \ess_{(1+\beta)\epsilon}$. We do this by defining a random variable $S'$ that is stochastically dominated by $S$, and at the same time it has low enough variance that we may use the Chebyshev inequality to show that, with high constant probability, $S' \geq \ess_{(1+\beta)\epsilon}$. 
Specifically, we define $S' = \frac{1}{t} \sum_{i=1}^t Y_i'$, where
\[Y_i' := \begin{cases}
    1/p(y_i) & y_i \succcurlyeq x_{(1+\beta)\epsilon}\\
    1/p_{(1+\beta)\epsilon} & x \preccurlyeq y_i \prec x_{(1+\beta) \epsilon} \\
    0 & y_i \prec x,
\end{cases}\]
where we recall that each $y_i \overset{i.i.d.}{\sim} \Pp$. (Recall that $x_{(1+\beta)\epsilon}$ is the $(1+\beta)\epsilon$ quantile of $\Pp$, and $x$ is the $\epsilon^*$ quantile of $\Pp$.) Note that this also implies each $Y_i'$ is i.i.d.

We now prove that $S'$ is stochastically dominated by $S$. The random variable $S$ is average of $t$ independent copies of $Y$ while $S'$ is an average of $t$ random variables $Y'_i$. It is thus sufficient to prove that $Y$ stochastically dominates $Y'_i$, since the $Y'_i$ variables are i.i.d. We do this by demonstrating a coupling between $Y$ and $Y_i'$ in which it always holds $Y \geq Y'_i$. Specifically, consider $Y$ and $Y'_i$ with the same sample $y$.
We have the following three cases.
\begin{enumerate}
    \item If $y \succcurlyeq x_{(1+\beta)\epsilon}$, then $Y_i' = 1/p(y) = Y$, since the indicator of $y \succcurlyeq x$ is $1$.
    \item If $x \preccurlyeq y \prec x_{(1+\beta)\epsilon}$, then $Y = 1/p(y)$ and $Y_i' = 1/p_{(1+\beta)\epsilon}$. However, $p(y) \le p(x_{(1+\beta)\epsilon}) = p_{(1+\beta)\epsilon}$, so $1/p(y) \ge 1/p_{(1+\beta)\epsilon}$.
    \item If $y \prec x$, then $Y = Y_i' = 0$, where $Y = 0$ since the indicator of $y \succcurlyeq x$ is $0$.
\end{enumerate}
In all cases, $Y \ge Y_i'$, so $Y$ stochastically dominates $Y'_i$. Thus, $S$ stochastically dominates $S'$.

At the same time, assuming $\epsilon^*$ is such that $x = x_{\epsilon^*} \preccurlyeq x_{(1+\beta)\epsilon}$, it holds that
\begin{align*}
    \BE[Y_i'|\epsilon^*] &= \sum_{\substack{y \in U \\ y \succcurlyeq x_{(1+\beta)\epsilon}}} p(y) \cdot \frac{1}{p(y)} + \sum_{\substack{y \in U \\ x \preccurlyeq y \prec x_{(1+\beta)\epsilon}}} p(y) \cdot \frac{1}{p_{(1+\beta)\epsilon}} \\
    &= \#\{y \in U: y \succeq x_{(1+\beta)\epsilon}\} + \frac{1}{p_{(1+\beta)\epsilon}} \cdot \BP_{X \sim \Pp}(x \preccurlyeq X \prec x_{(1+\beta)\epsilon}) \\
    &= \ess_{(1+\beta)\epsilon} + \frac{1}{p_{(1+\beta)\epsilon}} \cdot \left(\BP_{X \sim \Pp}(X \prec x_{(1+\beta)\epsilon})-\BP_{X \sim \Pp}(X \prec x)\right).
\end{align*}
    Since $x$ is the $\epsilon^*$ quantile, $\BP(X \prec x) \le \epsilon^*$, and since $x_{(1+\beta)\epsilon}$ is the $(1+\beta)\epsilon$ quantile, $\BP(X \preccurlyeq x_{(1+\beta)\epsilon}) > (1+\beta)\epsilon$, which means $\BP(X \prec x_{(1+\beta)\epsilon}) > (1+\beta)\epsilon - \BP(X = x_{(1+\beta)\epsilon}) = (1+\beta)\epsilon - p_{(1+\beta)\epsilon}$. In addition, we also have that $\BP(X \prec x_{(1+\beta)\epsilon}) - \BP(X \prec x) \ge 0$. Therefore, we have
\begin{align*}
    \BE[Y_i'|\epsilon^*] &\ge \ess_{(1+\beta)\epsilon} + \frac{1}{p_{(1+\beta)\epsilon}}\max\left((1+\beta)\epsilon - p_{(1+\beta)\epsilon} - \epsilon^*, 0\right) \\
    &= \ess_{(1+\beta)\epsilon} + \max\left(\frac{\beta \epsilon}{4 p_{(1+\beta)\epsilon}} - 1, 0\right),
\end{align*}
where the last inequality holds on $\mathcal{E}_1$, 
since that implies $\epsilon^* \le (1+3\beta/4)\epsilon$.
Since $\ess_{(1+\beta)\epsilon} \ge 1,$ this means that assuming $\mathcal{E}_1$, 
\[\BE[Y_i'|\epsilon^*] \ge \max\left(\ess_{(1+\beta)\epsilon}, \frac{\beta \epsilon}{4 p_{(1+\beta)\epsilon}}\right).\]

Next, assuming $\mathcal{E}_1$, we have that
\[
\Var[Y_i'|\epsilon^*] \leq \BE[Y_i'|\epsilon^*] \sup[Y_i'|\epsilon^*] = \BE[Y_i'|\epsilon^*]/p_{(1+\beta)\epsilon} \leq 4 \BE[Y_i'|\epsilon^*]^2/(\beta \epsilon),
\]
where the last inequality holds because $\BE[Y'|\epsilon^*] \geq \beta \epsilon/(4 p_{(1+\beta)\epsilon})$. Therefore, 
\[
\Var[S'|\epsilon^*] \leq 4 \BE[Y_i'|\epsilon^*]^2/(\beta \epsilon t) \leq \gamma^2 \BE[Y_i'|\epsilon^*]^2/125 = \gamma^2 \BE[S'|\epsilon^*]^2/125.
\]
By the Chebyshev inequality, we then have with probability at least $9/10$, that $S' \geq (1-0.4\gamma) \BE[S'|\epsilon^*] = (1-0.4\gamma) \BE[Y_i'|\epsilon^*] \geq (1-0.4\gamma) \ess_{(1+\beta)\epsilon}$. We call the event when this happens $\mathcal{E}_3$.

We have shown that the algorithm gives a correct output on $\mathcal{E}_1 \cap \mathcal{E}_2 \cap \mathcal{E}_3$. It holds that
\begin{align*}
\BP[\mathcal{E}_1 \cap \mathcal{E}_2 \cap \mathcal{E}_3] =& 1- \BP[\neg \mathcal{E}_1 \cup \neg \mathcal{E}_2 \cup \neg \mathcal{E}_3] \\
\geq& 1-\BP[\neg \mathcal{E}_1] - \BP[\neg \mathcal{E}_2 | \mathcal{E}_1] - \BP[\neg \mathcal{E}_3 | \mathcal{E}_1] > 2/3,
\end{align*}
where the last inequality holds because we bounded above each of the three probabilities by $1/10$.
\end{proof}

As a direct corollary of combining Theorem~\ref{thm:main} with Observation~\ref{obs:tradeoff_error_parameters}, we have the following.

\begin{corollary} \label{cor:main}
    By setting $\gamma = \epsilon \cdot \beta$ in \Cref{alg:main} and outputting $(1+\gamma/2)/(1+\gamma) \cdot S$ instead of $(1+\gamma/2) \cdot S$ in the final line of \Cref{alg:main}, we return an $[\epsilon, (1+2\beta)\epsilon]$-effective support size. The sample complexity is $O(\frac{1}{\epsilon^3 \beta^3})$.
\end{corollary}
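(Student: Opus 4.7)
The plan is to obtain the corollary as a direct combination of \Cref{thm:main} with \Cref{obs:tradeoff_error_parameters}, with the single parameter choice $\gamma = \beta \epsilon$. First, I would invoke \Cref{thm:main} with this choice of $\gamma$. We need to check the theorem's hypothesis that $\gamma \le 0.2$: since we already assume $\beta \le 0.2$ and $\epsilon < 1$ (WLOG, as in the theorem), we indeed have $\gamma = \beta \epsilon \le 0.2$. Consequently, with probability at least $2/3$, the quantity $X := (1+\gamma/2) S$ produced by \Cref{alg:main} is a $(1+\gamma) = (1+\beta\epsilon)$-factor approximation to the $[\epsilon,(1+\beta)\epsilon]$-effective support size of $\Pp$.

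Next, I would apply \Cref{obs:tradeoff_error_parameters} to this $X$. The observation, specialized to $\gamma = \beta \epsilon$, states exactly that $X/(1+\gamma)$ is an $[\epsilon,(1+2\beta)\epsilon]$-effective support size of $\Pp$, i.e., lies in $[\ess_{(1+2\beta)\epsilon}, \ess_\epsilon]$. The modified output in the corollary is precisely
\[
\frac{1+\gamma/2}{1+\gamma} \cdot S \;=\; \frac{X}{1+\gamma},
\]
so this quantity is an $[\epsilon,(1+2\beta)\epsilon]$-effective support size with probability at least $2/3$, as claimed.

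For the sample complexity, \Cref{thm:main} gives $O\!\left(\frac{1}{\beta^2 \epsilon} + \frac{1}{\epsilon \beta \gamma^2}\right)$ queries. Substituting $\gamma = \beta\epsilon$ yields
\[
O\!\left(\frac{1}{\beta^2 \epsilon} + \frac{1}{\epsilon \beta \cdot \beta^2 \epsilon^2}\right) \;=\; O\!\left(\frac{1}{\beta^2 \epsilon} + \frac{1}{\beta^3 \epsilon^3}\right) \;=\; O\!\left(\frac{1}{\beta^3 \epsilon^3}\right),
\]
where the last equality holds because $\beta, \epsilon < 1$ so the second term dominates the first.

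There is essentially no obstacle here beyond bookkeeping of parameters; the proof is a two-line reduction. The only minor thing to be careful about is checking that the parameter regime ($\gamma \le 0.2$) needed for \Cref{thm:main} is preserved under the substitution $\gamma = \beta \epsilon$, and that the form of the modified output matches what \Cref{obs:tradeoff_error_parameters} requires (namely $X/(1+\gamma)$). Both are immediate.
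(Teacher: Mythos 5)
Your proof is correct and matches the paper's own (implicit) argument: the paper presents this as "a direct corollary of combining Theorem~\ref{thm:main} with Observation~\ref{obs:tradeoff_error_parameters}," which is exactly the two-step reduction you carried out, including the parameter substitution $\gamma = \beta\epsilon$ and the resulting complexity bookkeeping.
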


\section*{Acknowledgments}

Shyam Narayanan would like to thank Prof. Mikkel Thorup and BARC for allowing him to visit BARC, which facilitated this research.

%
%

\bibliographystyle{plainnat}
\bibliography{literature}

\begin{thebibliography}{18}
\providecommand{\natexlab}[1]{#1}
\providecommand{\url}[1]{\texttt{#1}}
\expandafter\ifx\csname urlstyle\endcsname\relax
  \providecommand{\doi}[1]{doi: #1}\else
  \providecommand{\doi}{doi: \begingroup \urlstyle{rm}\Url}\fi

\bibitem[Batu et~al.(2005)Batu, Dasgupta, Kumar, and
  Rubinfeld]{batu2005complexity}
Tugkan Batu, Sanjoy Dasgupta, Ravi Kumar, and Ronitt Rubinfeld.
\newblock The complexity of approximating the entropy.
\newblock \emph{SIAM Journal on Computing}, 35\penalty0 (1):\penalty0 132--150,
  2005.

\bibitem[Blais et~al.(2017)Blais, Canonne, and Gur]{blais2017ess}
Eric Blais, Cl\'{e}ment Canonne, and Tom Gur.
\newblock Distribution testing lower bounds via reductions from communication
  complexity.
\newblock In \emph{Computational Complexity Conference}. Leibniz International
  Proceedings in Informatics, 2017.

\bibitem[Canonne(2020)]{canonnesurvey}
Cl\'{e}ment Canonne.
\newblock A survey on distribution testing. your data is big. but is it blue?
\newblock \emph{Theory of Computing Library Graduate Surveys}, 9:\penalty0
  1--100, 2020.

\bibitem[Canonne and Rubinfeld(2014)]{canonne2014testing}
Cl{\'e}ment Canonne and Ronitt Rubinfeld.
\newblock Testing probability distributions underlying aggregated data.
\newblock In \emph{International Colloquium on Automata, Languages, and
  Programming}, pages 283--295. Springer, 2014.

\bibitem[Eden et~al.(2021)Eden, Indyk, Narayanan, Rubinfeld, Silwal, and
  Wagner]{eden2021learning}
Talya Eden, Piotr Indyk, Shyam Narayanan, Ronitt Rubinfeld, Sandeep Silwal, and
  Tal Wagner.
\newblock Learning-based support estimation in sublinear time.
\newblock In \emph{International Conference on Learning Representations}.
  OpenReview.net, 2021.
\newblock URL \url{https://openreview.net/forum?id=tilovEHA3YS}.

\bibitem[Fisher et~al.(1943)Fisher, Corbet, and Williams]{fisher1943relation}
Ronald~A Fisher, A~Steven Corbet, and Carrington~B Williams.
\newblock The relation between the number of species and the number of
  individuals in a random sample of an animal population.
\newblock \emph{The Journal of Animal Ecology}, pages 42--58, 1943.

\bibitem[Gandolfi and Sastri(2004)]{gandolfi2004nonparametric}
Alberto Gandolfi and Chelluri~CA Sastri.
\newblock Nonparametric estimations about species not observed in a random
  sample.
\newblock \emph{Milan Journal of Mathematics}, 72\penalty0 (1):\penalty0
  81--105, 2004.

\bibitem[Goldreich(2019{\natexlab{a}})]{goldreich2019complexity}
Oded Goldreich.
\newblock On the complexity of estimating the effective support size.
\newblock In \emph{Electron. Colloquium Comput. Complex.}, volume~26, page~88,
  2019{\natexlab{a}}.

\bibitem[Goldreich(2019{\natexlab{b}})]{sublinear}
Oded Goldreich.
\newblock Problem 100: Effective support size estimation in the dual model.
\newblock \url{https://sublinear.info/index.php?title=Open_Problems:100}, 26
  August 2019{\natexlab{b}}.

\bibitem[Good(1953)]{good1953population}
Irving~J Good.
\newblock The population frequencies of species and the estimation of
  population parameters.
\newblock \emph{Biometrika}, 40\penalty0 (3-4):\penalty0 237--264, 1953.

\bibitem[Guha et~al.(2006)Guha, McGregor, and
  Venkatasubramanian]{guha2006streaming}
Sudipto Guha, Andrew McGregor, and Suresh Venkatasubramanian.
\newblock Streaming and sublinear approximation of entropy and information
  distances.
\newblock In \emph{Symposium on Discrete Algorithms}, pages 733--742, 2006.

\bibitem[Hansen and Hurwitz(1943)]{hansen1943theory}
Morris~H Hansen and William~N Hurwitz.
\newblock On the theory of sampling from finite populations.
\newblock \emph{The Annals of Mathematical Statistics}, 14\penalty0
  (4):\penalty0 333--362, 1943.

\bibitem[Onak and Sun(2018)]{onak2018probability}
Krzysztof Onak and Xiaorui Sun.
\newblock Probability--revealing samples.
\newblock In \emph{International Conference on Artificial Intelligence and
  Statistics}. PMLR, 2018.

\bibitem[Raskhodnikova et~al.(2009)Raskhodnikova, Ron, Shpilka, and
  Smith]{raskhodnikova2009strong}
Sofya Raskhodnikova, Dana Ron, Amir Shpilka, and Adam Smith.
\newblock Strong lower bounds for approximating distribution support size and
  the distinct elements problem.
\newblock \emph{SIAM Journal on Computing}, 39\penalty0 (3):\penalty0 813--842,
  2009.

\bibitem[Stewart et~al.(2018)Stewart, Diakonikolas, and Canonne]{StewartDC18}
Alistair Stewart, Ilias Diakonikolas, and Cl{\'{e}}ment~L. Canonne.
\newblock Testing for families of distributions via the fourier transform.
\newblock In \emph{Advances in Neural Information Processing Systems 31}, pages
  10084--10095, 2018.

\bibitem[Valiant and Valiant(2011)]{valiant2011estimating}
Gregory Valiant and Paul Valiant.
\newblock Estimating the unseen: an n/log (n)-sample estimator for entropy and
  support size, shown optimal via new clts.
\newblock In \emph{Proceedings of the forty-third annual ACM symposium on
  Theory of computing}, pages 685--694, 2011.

\bibitem[Valiant and Valiant(2013)]{valiant2013estimating}
Paul Valiant and Gregory Valiant.
\newblock Estimating the unseen: Improved estimators for entropy and other
  properties.
\newblock In \emph{Advances in Neural Information Processing Systems},
  volume~26. Curran Associates, Inc., 2013.

\bibitem[Wu and Yang(2019)]{wu2019chebyshev}
Yihong Wu and Pengkun Yang.
\newblock Chebyshev polynomials, moment matching, and optimal estimation of the
  unseen.
\newblock \emph{The Annals of Statistics}, 47\penalty0 (2):\penalty0 857--883,
  2019.

\end{thebibliography}
\end{document}